\documentclass{article}

\usepackage{spconf,amsmath,graphicx}
\usepackage{amssymb}
\usepackage{amsthm}
\usepackage{amsmath}
\usepackage{algorithm}
\usepackage{algorithmic}
\usepackage{booktabs}
\usepackage{wrapfig}
\usepackage[absolute]{textpos}

\newtheorem{theorem}{Theorem}

\newtheorem{remark}{Remark}


\title{
LATENT-SHIFT: GRADIENT OF ENTROPY HELPS NEURAL CODECS
}
%
\name{Muhammet Balcilar, Bharath Bhushan Damodaran, Karam Naser, Franck Galpin \& Pierre Hellier \thanks{© 2023 IEEE. Personal use of this material is permitted. Permission from IEEE must be obtained for all
other uses, in any current or future media, including reprinting/republishing this material for advertising or
promotional purposes, creating new collective works, for resale or redistribution to servers or lists, or reuse
of any copyrighted component of this work in other works.}}
\address{InterDigital, Inc., France}
%
%
%

\begin{document}

\maketitle

%
\begin{abstract}
End-to-end image/video codecs are getting competitive compared to traditional compression techniques that have been developed through decades of manual engineering efforts.
These trainable codecs have many advantages over traditional techniques such as easy adaptation on perceptual distortion metrics and high performance on specific domains thanks to
their learning ability. However, state of the art neural codecs does not take advantage of the existence of gradient of entropy in decoding device. In this paper, we theoretically show that gradient of entropy (available at decoder side) is correlated with the gradient of the reconstruction error (which is not available at decoder side). We then demonstrate experimentally that this gradient can be used on various compression methods, leading to a  $1-2\%$ rate savings for the same quality. Our method is orthogonal to other improvements and brings independent rate savings. 


\end{abstract}

\begin{keywords}
Neural compression, gradient of entropy.
\end{keywords}
\section{Introduction}
\label{sec:intro}

Lossy Image and video compression is a fundamental task in image processing. 
Thanks to the community's decades long efforts, traditional methods (e.g. VVC) have reached current state of the art rate-distortion (RD) performance and dominate current codecs market. Recently, end-to-end trainable deep models have emerged with promising RD performances by learning the informative latents and modeling the latent distribution. 

End-to-end deep compression methods are generally rate-distortion autoencoders \cite{habibian2019video}, where the latents are optimized using a rate-distortion loss function. For perceptual friendly compression, distortion based on a perceptual metric can also be used in the loss function \cite{blau19a}. These methods can be seen as a special case of Variational Autoencoder (VAE) models as described in \cite{kingma2013auto}, where the approximate posterior distribution is a uniform distribution centered on the encoder's outputs (latents) at training time and has a fixed variance output distribution and trainable priors \cite{theis2017lossy,balle2016end}. It was shown that minimizing the evidence lower bound (ELBO) of this special VAE is equivalent to minimizing jointly the mean square error (MSE) of the reconstruction and the entropy of latents w.r.t the priors \cite{balle2018variational}.
All proposed models differ mainly by the modelling of priors: using either fully-factorized \cite{balle2016end}, zero-mean gaussian \cite{balle2018variational}, gaussian \cite{minen_joint,minnen2020channel} or mixture of gaussian \cite{cheng2020image}, where some methods predict the priors using an autoregressive schema \cite{minen_joint,minnen2020channel,cheng2020image,xie2021enhanced}. 
These neural image codecs were extended to the video compression domain by using two VAEs, one for encoding motion information, another one for encoding residual information in end-to-end video compression \cite{dvc,ssf,aivc}. 


One of the main principle in general compression is to exploit all information available at the decoder to reconstruct the data. However, even though the gradients of the entropy w.r.t latents are available at decoder side, this information remains unused so far in the literature. Some similar work in the literature have tried to improve the performance of codec, for example by using specific parameterization \cite{reducinggap}, or computationally heavy fine-tuning solutions \cite{NEURIPS2020_066f182b, mari2022features,Campos_2019_CVPR_Workshops,LuCZCOXG20}. However, all these methods ignore the gradients of the entropy. 

In this paper, 
we apply the Karush–Kuhn–Tucker (KKT) conditions on the neural codec, which has never been proposed to the best of our knowledge. More specifically, the main contributions of our paper are the following:
\begin{itemize}
    \item We demonstrate that the gradient of the reconstruction error (unavailable at decoder side) is correlated with the gradient of entropy (available at decoder side) w.r.t latents. We bring the theoretical proof that holds in average over the training dataset, and show experimentally that this also holds for each training sample.
    \item We show that it is possible to use the available gradient as a proxy for the unavailable gradient in order to improve the performance of existing neural codecs without re-training.
    \item We report a rate saving between $1-2\%$ at preserved quality, for several neural codecs architectures and without the need for retraining. We also demonstrate that this method can also benefit to network fine-tuning.
\end{itemize}

\section{NEURAL IMAGE COMPRESSION}
\label{sec:endtoend}

In end-to-end image compression, the encoder $\mathbf{y}=g_a(\mathbf{x};\mathbf{\phi})$ 
 transforms an input image $\mathbf{x} \in \mathbf{R}^{n\times n \times 3}$ into a lower dimensional continuous latent $\mathbf{y} \in \mathbf{R}^{m\times m \times o}$ and quantize it to obtain latent codes $\mathbf{\hat{y}}=\mathbf{Q}(\mathbf{y})$ by element-wise quantization function $\mathbf{Q}(.)$.
Later, the latent codes are compressed losslessly by the entropy coder using factorized entropy model $p_{f}(\mathbf{\hat{y}}|\Psi)$ in \cite{balle2016end}. However, 
if the entropy model of the latent codes $\mathbf{\hat{y}}$ is conditioned with side information to account for the spatial dependencies, the side latent $\mathbf{z}=h_a(\mathbf{y} ;\Phi)$ where $\mathbf{z} \in \mathbf{R}^{k\times k \times f}$ (and side codes $\mathbf{\hat{z}}=\mathbf{Q}(\mathbf{z})$) is also learnt. 
In this setting, the main codes $\mathbf{\hat{y}}$ is encoded with the hyperprior entropy model $p_{h}(\mathbf{\hat{y}}|\mathbf{\hat{z}};\Theta)$, and the side code $\mathbf{\hat{z}}$ is encoded with the factorized entropy model $p_{f}(\mathbf{\hat{z}}|\Psi)$ \cite{balle2018variational,minen_joint}.   
The decoder $\mathbf{\hat{x}}=g_s(\mathbf{\hat{y}};\mathbf{\theta})$ converts back the transmitted latent to the reconstructed image $\mathbf{\hat{x}}$. 

The loss function in \eqref{eq:loss} consists of three parts: the expected bitlength of main codes, the expected bitlength of side codes and the distortion. $d(.,.)$ is any distortion loss such as Mean Square Error (MSE), $\lambda$ is a trade-off parameter to balance between compression ratio and reconstruction quality.

\begin{equation}
   \label{eq:loss}   \mathcal{L}(\mathbf{x})=-log(p_{h}(\mathbf{\hat{y}}|\mathbf{\hat{z}},\Theta)) -log(p_{f}(\mathbf{\hat{z}}|\Psi)) + \lambda d(\mathbf{x},\mathbf{\hat{x}}).
\end{equation}

The training phase aims at finding the optimal network parameters by minimizing \eqref{eq:loss}, for images sampled from given train set $D$:

\begin{equation}
   \label{eq:train}
   \mathbf{\phi}^*, \mathbf{\theta}^*, \Psi^*, \Theta^*,\Phi^*= \mathop{\mathbf{argmin}}_{
    \mathbf{\phi}, \mathbf{\theta}, \Psi, \Theta,\Phi} ( 
    \mathop{\mathbf{E}}_{
    \mathbf{x}\sim D}(\mathcal{L}(\mathbf{x}))  ).
\end{equation}


After offline training, the learned parameters can be deployed to the sender and receiver devices. At test time, the image is transformed to main $\mathbf{\hat{y}}$ and side $\mathbf{\hat{z}}$ codes by trained encoder networks. First, $\mathbf{\hat{z}}$ is encoded by shared factorized entropy model. Then, $\mathbf{\hat{y}}$ is encoded by hyperprior entropy model with input of $\mathbf{\hat{z}}$. At the receiver side, first  $\mathbf{\hat{z}}$ is decoded by shared factorized entropy model. Using $\mathbf{\hat{z}}$ into the hyperprior model, receiver decode $\mathbf{\hat{y}}$ from the bitstream, leading to the reconstructed image using trained decoder network.

\subsection{Fine-tuning solutions}
During offline training, the parameters are optimal in average for the training set, but sub-optimal for a single test image. In order to make image specific RD optimization, the codes can be further fine-tuned at encoding time. In these solutions, the main and side codes are directly optimized for the same loss as in \eqref{eq:loss} while keeping the network parameters fixed as follows: 

\begin{equation}
   \label{eq:finetune}
   \mathbf{\hat{y}}^*,\mathbf{\hat{z}}^*= \mathop{\mathbf{argmin}}_{
    \mathbf{\hat{y}},\mathbf{\hat{z}}} ( 
    \mathcal{L}(\mathbf{x}) ).
\end{equation}

It is reported that, albeit a large increase of encoding time (typically x4000 ), these methods can obtain  a $5\%-15\%$ rate saving compared to the baseline model for the same quality \cite{NEURIPS2020_066f182b,mari2022features,Campos_2019_CVPR_Workshops}.

\section{Forgotten Information: Gradient of Entropy}
\label{sec:shiftlatent}
In this section, we describe how the gradients of the entropy on receiver side are used to improve the reconstruction. 
When the receiver decodes side codes, the gradients of side entropy w.r.t side codes can be computed. Similarly, when decoding main codes, the gradients of main entropy w.r.t main codes can also be computed.
Since these gradients are only available after decoding the codes, they seem not useful at first sight. 
Here, we propose to use the gradients through the analysis of the Karush-Kuhn-Tucker (KKT) conditions, and we claim that the first gradient is correlated with the gradient of the main entropy w.r.t side codes and the second one is correlated with the gradient of the reconstruction error w.r.t main codes. Thus, the first one  can be used to decrease the bitlength of the main information and using the second one can decrease the reconstruction error. 

\subsection{Karush-Kuhn-Tucker (KKT) Conditions}
Let us consider the neural codec's loss function in \eqref{eq:loss} as an unconstrained multi-objectives optimization problem, where the objectives are minimum bitlength of main codes, minimum bitlength of side codes and minimum reconstruction error. The optimal solution of multi-objective problem is called Pareto Optimal, when no objective can be improved without degrading the others \cite{miettinen2012nonlinear}. The following remark shows a useful property of a solution of unconstrained multi-objective optimization problems.
\begin{remark}
 \label{rm:kkt}
A solution of the multi-objective optimization problem is Pareto Optimal, if and only if it satisfies Karush Kuhn Tucker (KKT) conditions. More specifically, 
if the aim of the problem is $w^*=\mathop{\mathbf{argmin}}_{w}(\sum_i \alpha_i.\mathcal{L}_i(w))$;
where $\alpha_i \geq 0$, $\sum \alpha_i = 1$ and $\mathcal{L}_i$ is the $i$-th objective to be minimized, the solution $w^*$ is Pareto Optimal if and only if it following condition is met: 
\cite{desideri2012multiple}.
\begin{equation}
\label{eq:moop}
\notag
   \sum_i \alpha_i\nabla_w\mathcal{L}_i(w^*))=0.
\end{equation}
\end{remark}

In plain words, this shows that at the optimal point, all forces driven by gradients cancel each other out and the solution reaches a saddle point. This property is used to test the optimality of the candidate solutions.   
This remark is also valid for end-to-end compression models. The following theorem shows how to use KKT conditions for an end-to-end image compression models.

\begin{theorem}
  \label{Th:th2}
An end-to-end compression model optimized with $\lambda$ trade-off is Pareto Optimal, if and only if the following two conditions are met:
\begin{equation}
\label{eq:kkt1}
\small
   \mathop{\mathbf{E}}_{\substack{\mathbf{x}\sim D }}\left[ \nabla_{\mathbf{\hat{z}}}(-log(p_f(\mathbf{\hat{z}};\Psi)))+ \nabla_{\mathbf{\hat{z}}}(-log(p_h(\mathbf{\hat{y}};\mathbf{\hat{z}},\Theta)))      \right]=0
\end{equation}
\begin{equation}
\small
\label{eq:kkt2}
   \mathop{\mathbf{E}}_{\substack{\mathbf{x}\sim D }}\left[ \nabla_{\mathbf{\hat{y}}}(-log(p_h(\mathbf{\hat{y}};\mathbf{\hat{z}},\Theta)))+ \lambda \nabla_{\mathbf{\hat{y}}}(d(\mathbf{x},g_s(\mathbf{\hat{y}};\mathbf{\theta})))      \right]=0
\end{equation}
\end{theorem}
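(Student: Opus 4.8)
The plan is to apply Remark~\ref{rm:kkt} to the three scalar objectives that make up the loss \eqref{eq:loss} and then read off \eqref{eq:kkt1} and \eqref{eq:kkt2} by splitting the resulting stationarity condition along the two groups of variables $\mathbf{\hat{z}}$ and $\mathbf{\hat{y}}$. Concretely, set $\mathcal{L}_1 = \mathbf{E}_{\mathbf{x}\sim D}[-\log p_h(\mathbf{\hat{y}};\mathbf{\hat{z}},\Theta)]$, $\mathcal{L}_2 = \mathbf{E}_{\mathbf{x}\sim D}[-\log p_f(\mathbf{\hat{z}};\Psi)]$ and $\mathcal{L}_3 = \mathbf{E}_{\mathbf{x}\sim D}[d(\mathbf{x},g_s(\mathbf{\hat{y}};\mathbf{\theta}))]$, so that the expected training objective in \eqref{eq:train} is $\mathcal{L}_1+\mathcal{L}_2+\lambda\mathcal{L}_3$; dividing through by $2+\lambda$ writes it as $\sum_i\alpha_i\mathcal{L}_i$ with $\alpha_1=\alpha_2=\tfrac{1}{2+\lambda}$, $\alpha_3=\tfrac{\lambda}{2+\lambda}$, which are nonnegative and sum to one, as Remark~\ref{rm:kkt} requires.

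Taking the optimization variable $w$ to collect the transmitted codes $\mathbf{\hat{y}}$ and $\mathbf{\hat{z}}$ (this is exactly the per-image problem \eqref{eq:finetune}; one can also append the network parameters), Remark~\ref{rm:kkt} says the point is Pareto optimal iff $\alpha_1\nabla_w\mathcal{L}_1+\alpha_2\nabla_w\mathcal{L}_2+\alpha_3\nabla_w\mathcal{L}_3=0$. I would then restrict this vector identity to the $\mathbf{\hat{z}}$-block and to the $\mathbf{\hat{y}}$-block separately and use the architecture to discard the vanishing terms: the synthesis transform $g_s$ is fed only $\mathbf{\hat{y}}$, so $\nabla_{\mathbf{\hat{z}}}\,d(\mathbf{x},g_s(\mathbf{\hat{y}};\mathbf{\theta}))=0$; and the factorized model of the side code depends only on $\mathbf{\hat{z}}$ and $\Psi$, so $\nabla_{\mathbf{\hat{y}}}(-\log p_f(\mathbf{\hat{z}};\Psi))=0$. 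The $\mathbf{\hat{z}}$-block then collapses to $\tfrac{1}{2+\lambda}\,\mathbf{E}[\nabla_{\mathbf{\hat{z}}}(-\log p_f)+\nabla_{\mathbf{\hat{z}}}(-\log p_h)]=0$, which after clearing the constant is \eqref{eq:kkt1}, while the $\mathbf{\hat{y}}$-block becomes $\tfrac{1}{2+\lambda}\mathbf{E}[\nabla_{\mathbf{\hat{y}}}(-\log p_h)]+\tfrac{\lambda}{2+\lambda}\mathbf{E}[\nabla_{\mathbf{\hat{y}}}d]=0$, i.e. \eqref{eq:kkt2}. Pushing $\nabla$ through the finite average $\mathbf{E}_{\mathbf{x}\sim D}$ (plain linearity) produces the expectation in the statement, and since the two blocks together \emph{are} the full stationarity vector, the implication runs in both directions, giving the ``if'' and the ``only if''.

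The step I expect to need the most care is the differentiability bookkeeping around the quantizer: $\mathbf{\hat{y}}=\mathbf{Q}(\mathbf{y})$ and $\mathbf{\hat{z}}=\mathbf{Q}(\mathbf{z})$ are piecewise constant, so $\nabla_{\mathbf{\hat{y}}}$ and $\nabla_{\mathbf{\hat{z}}}$ must be read through the training-time surrogate under which \eqref{eq:loss} is actually minimized (additive-uniform-noise relaxation, or a straight-through estimator), and I would state this modelling assumption up front. A second, milder point is the reading of ``Pareto optimal'': treating $\mathbf{\hat{y}},\mathbf{\hat{z}}$ as the free variables makes the block split an exact equivalence with no leftover coordinates, whereas if $w$ is only the network parameters one additionally needs the Jacobians $\partial\mathbf{\hat{z}}/\partial\Phi$ and $\partial\mathbf{\hat{y}}/\partial\mathbf{\phi}$ to have full column rank in order to pull stationarity back onto the codes; I would adopt the former reading to keep the argument self-contained.
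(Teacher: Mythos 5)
Your proposal is correct and follows essentially the same route as the paper's proof: the same three-way decomposition of the loss, the same weights $\alpha_1=\alpha_2=\tfrac{1}{2+\lambda}$, $\alpha_3=\tfrac{\lambda}{2+\lambda}$, and the same block-wise splitting of the stationarity condition using $\nabla_{\mathbf{\hat{z}}}\,d=0$ and $\nabla_{\mathbf{\hat{y}}}(-\log p_f)=0$. Your added caveat about reading the gradients through the training-time quantization surrogate is a point the paper leaves implicit, but it does not change the argument.
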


\begin{proof}
Let's define objectives by
$\mathcal{L}_1(\mathbf{\hat{z}}):=-log(p_f(\mathbf{\hat{z}};\Psi))$,  $\mathcal{L}_2(\mathbf{\hat{y}},\mathbf{\hat{z}}):=-log(p_h(\mathbf{\hat{y}};\mathbf{\hat{z}},\Theta))$ and finally  $\mathcal{L}_3(\mathbf{x},\mathbf{\hat{y}}):=d(\mathbf{x},g_s(\mathbf{\hat{y}};\mathbf{\theta}))$. The coefficients $\alpha_1=\alpha_2=1/(2+\lambda)$ and $\alpha_3=\lambda/(2+\lambda)$ satisfy $\alpha_i \geq 0$ and $\sum \alpha_i = 1$. We can rewrite the  weighed loss with $1/(2+\lambda)$ as follow:
\begin{equation}
\label{eq:loss_ap2}  
   \footnotesize   
\mathop{\mathbf{E}}_{\substack{\mathbf{x}\sim D }}\left[ \frac{\mathcal{L}(\mathbf{x})}{2+\lambda} \right]= \mathop{\mathbf{E}}_{\substack{\mathbf{x}\sim D }}\left[\alpha_1\mathcal{L}_1(\mathbf{\hat{z}})+\alpha_2\mathcal{L}_2(\mathbf{\hat{y}},\mathbf{\hat{z}})+\alpha_3\mathcal{L}_3(\mathbf{x},\mathbf{\hat{y}}) \right].
\end{equation}

Since it is an unconstrained multi-objective optimization problem, RHS should meet $\mathop{\mathbf{E}}_{\substack{\mathbf{x}\sim D }}\left[\sum_i \alpha_i\nabla_w\mathcal{L}_i(w^*)) \right]=0$ (KKT conditions). When we parameterize \eqref{eq:loss_ap2} with $\mathbf{\hat{z}}$, $\mathcal{L}_3$ is independent of $\mathbf{\hat{z}}$, thus $\nabla_{\mathbf{\hat{z}}}\mathcal{L}_3(\mathbf{x},\mathbf{\hat{y}})=0$. The first KKT conditions w.r.t $\mathbf{\hat{z}}$ can be equivalent to \eqref{eq:kkt1}:

\begin{equation}
   \label{eq:loss_ap3}
   \footnotesize
     \mathop{\mathbf{E}}_{\mathbf{x}\sim D} \left[ \alpha_1\nabla_{\mathbf{\hat{z}}}\mathcal{L}_1(\mathbf{\hat{z}})+\alpha_2\nabla_{\mathbf{\hat{z}}}\mathcal{L}_2(\mathbf{\hat{y}},\mathbf{\hat{z}})    \right]=0.
\end{equation}

When we parameterize \eqref{eq:loss_ap2} with $\mathbf{\hat{y}}$, $\mathcal{L}_1$ is independent of  $\mathbf{\hat{y}}$, thus $\nabla_{\mathbf{\hat{y}}}\mathcal{L}_1(\mathbf{\hat{z}})=0$. The second KKT conditions w.r.t $\mathbf{\hat{y}}$ can be equivalent to \eqref{eq:kkt2}:

\begin{equation}
   \label{eq:loss_ap4}
   \footnotesize
    \mathop{\mathbf{E}}_{\mathbf{x}\sim D} \left[ \alpha_2\nabla_{\mathbf{\hat{y}}}\mathcal{L}_2(\mathbf{\hat{y}},\mathbf{\hat{z}})+ \alpha_3\nabla_{\mathbf{\hat{y}}}\mathcal{L}_3(\mathbf{x},\mathbf{\hat{y}})   \right]=0.     
\end{equation}
\end{proof}

The proof is straight-forward, but the result of this theorem is significant. One can interpret the theorem that if an existing end-to-end models is optimal (at least in terms of training {performance}, not in terms of compression performance), gradient of side information’s entropy w.r.t side latents $\nabla_{\mathbf{\hat{z}}}(-log(p_f(\mathbf{\hat{z}};\Psi)))$ and gradient of main information’s entropy w.r.t side latents $ \nabla_{\mathbf{\hat{z}}}(-log(p_h(\mathbf{\hat{y}};\mathbf{\hat{z}},\Theta)))$ cancel themselves out in expectation. It is the same for the second condition as well. We can claim that main information’s entropy w.r.t main latents $\nabla_{\mathbf{\hat{y}}}(-log(p_h(\mathbf{\hat{y}};\mathbf{\hat{z}},\Theta)))$ and weighted gradient of reconstruction error w.r.t main latents $\lambda \nabla_{\mathbf{\hat{y}}}(d(\mathbf{x},g_s(\mathbf{\hat{y}};\mathbf{\theta})))$ cancels themselves out in expectation.


 \begin{figure*}[htb]
 \begin{center}
\includegraphics[width=0.99\linewidth]{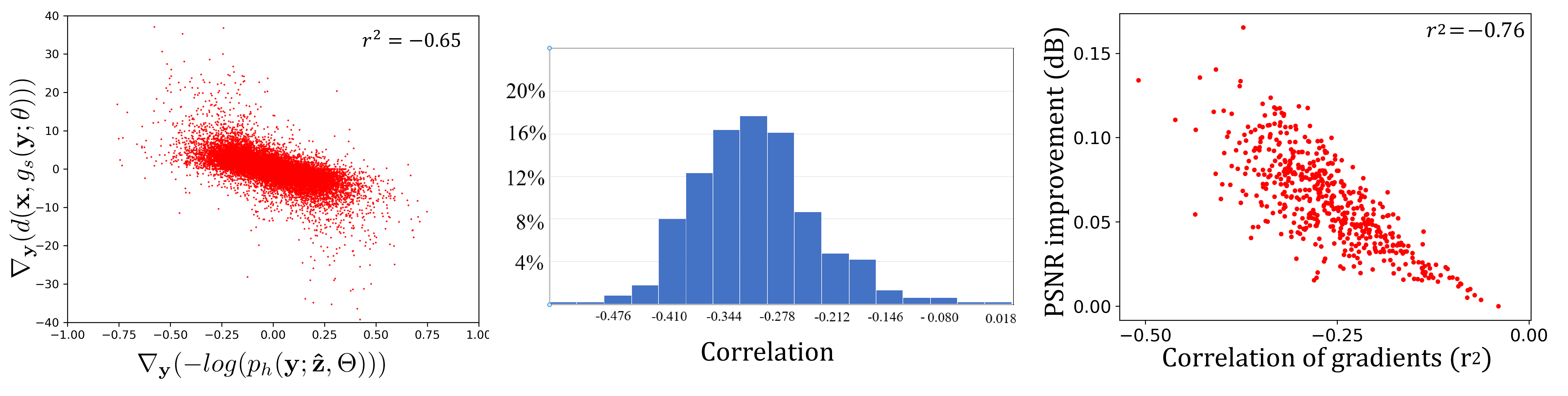}
\vskip -0.15in
\caption{a) Correlation between gradients of the entropy and the reconstruction error. b) Histogram of correlation between gradients w.r.t main latents. c) Correlation between improvement on reconstruction quality and correlation of gradients. }
\label{fig:corr}
\end{center}
\vskip -0.15in
\end{figure*}

If these conditions were valid for every single image, a given test image's available gradient would be the opposite of the unavailable gradient, thus they would have $-1$ correlation coefficient. These conditions are valid in average on the training set. However, we claim that this relationship still holds somehow for any single test image. Let us verify experimentally that this assumption is true. Figure~\ref{fig:corr}a shows the correlation between a sample image's gradients w.r.t the main latents and gradients w.r.t reconstruction error. In this specific case, the correlation coefficient of the gradients is $-0.65$.  
According to our test on several neural codecs, we have found quite a strong correlation between gradients 
as it is shown on histogram in Figure~\ref{fig:corr}b. 

\subsection{Latent Shift w.r.t Gradients} 
In this section, we explain how the relationship between these gradients can be used at decoding side to improve the performance of the codec. The main intuition is to shift the latents in the direction of the known gradient, as it should have a positive influence on the bitrate of the side latent and the reconstructed image with negligible cost in decoding time. 

By definition of gradient based optimization, $\mathbf{\hat{z}}$ needs to take a step in the negative direction of $\nabla_{\mathbf{\hat{z}}}(-log(p_h(\mathbf{\hat{y}};\mathbf{\hat{z}},\Theta)))$ in order to decrease the main information bitlength that is $-log(p_h(\mathbf{\hat{y}};\mathbf{\hat{z}},\Theta)$. However $\nabla_{\mathbf{\hat{z}}}(-log(p_h(\mathbf{\hat{y}};\mathbf{\hat{z}},\Theta)))$ is not available before decoding $\mathbf{\hat{y}}$, but the correlated gradient $\nabla_{\mathbf{\hat{z}}}(-log(p_f(\mathbf{\hat{z}};\Psi)))$ is known after decoding $\mathbf{\hat{z}}$. We claim that there is a step size $\rho_f^*$ that decrease the bitlength of main information. 
$\rho_f^*$ can be obtained by brute force or any optimization method to find the optimal such that:
\begin{equation}
\label{eq:opt_side}
\footnotesize
   \rho_f^*=\mathop{\mathbf{argmin}}_{\rho_f}(-log(p_h(\mathbf{\hat{y}};\mathbf{\hat{z}}+\rho_f\nabla_{\mathbf{\hat{z}}}(-log(p_f(\mathbf{\hat{z}};\Psi)))),\Theta)).
\end{equation}

For the second condition, $\mathbf{\hat{y}}$ needs to take a step in the negative direction of $\nabla_{\mathbf{\hat{y}}}(d(\mathbf{x},g_s(\mathbf{\hat{y}};\mathbf{\theta})))$ in order to decrease reconstruction error $d(\mathbf{x},g_s(\mathbf{\hat{y}};\mathbf{\theta}))$. Since $\nabla_{\mathbf{\hat{y}}}(d(\mathbf{x},g_s(\mathbf{\hat{y}};\mathbf{\theta})))$ is not available at decoding, we use the correlated gradient $\nabla_{\mathbf{\hat{y}}}(-log(p_h(\mathbf{\hat{y}};\mathbf{\hat{z}},\Theta)))$  which is known after decoding $\mathbf{\hat{y}}$ and $\mathbf{\hat{z}}$.  We claim that there is a step size $\rho_h^*$ that decrease the reconstruction error. 
$\rho_h^*$ can be found by brute-force search or any optimization method to find  the optimal such that:
\begin{equation}
\label{eq:opt_main}
\footnotesize
   \rho_h^*=\mathop{\mathbf{argmin}}_{\rho_h}(d(\mathbf{x},g_s(\mathbf{\hat{y}}+\rho_h\nabla_{\mathbf{\hat{y}}}(-log(p_h(\mathbf{\hat{y}};\mathbf{\hat{z}},\Theta)));\mathbf{\theta}))).
\end{equation}
In summary, our proposal can be seen as shifting the side latent by $\mathbf{\hat{z}} \leftarrow \mathbf{\hat{z}}+\rho_f^*\nabla_{\mathbf{\hat{z}}}(-log(p_f(\mathbf{\hat{z}};\Psi)))$ after decoding $\mathbf{\hat{z}}$ and shifting the main latent by $\mathbf{\hat{y}} \leftarrow \mathbf{\hat{y}}+\rho_h^*\nabla_{\mathbf{\hat{y}}}(-log(p_h(\mathbf{\hat{y}};\mathbf{\hat{z}},\Theta)))$ after decoding $\mathbf{\hat{y}}$ while the best step sizes $\rho_f^*, \rho_h^* \in \mathbf{R}$ are to be found at encoding time and signaled in the bitstream. It is noted that during the encoding, the main latents $\mathbf{\hat{y}}$ is encoded w.r.t entropy model predicted by the shifted side latent.
We show that the shifted side latents shortens the main information's bitlength and shifted main latents results to obtain better reconstruction performance.

\section{Experimental Results}
\label{sec:exp}
We use CompressAI library \cite{compressai} to test our contribution on $5$ pre-trained neural image codecs named \textbf{bmshj2018-factorized} in \cite{balle2016end}, \textbf{mbt2018-mean} and \textbf{mbt2018} in \cite{minen_joint},  \textbf{cheng2020-attn} in \cite{cheng2020image} and  \textbf{invcompress} in \cite{xie2021enhanced}. 
For the evaluation, we use Kodak dataset \cite{eastman_kodak_kodak_nodate} and Clic-2021 Challenge's Professional dataset \cite{CLIC}. 

\begin{table}[!htbp]
\centering
\caption{Average BD-Rate gains of \textbf{Latent Shift} for different baseline image codecs on 2 image datasets.}
\label{tab:image}
\footnotesize
\begin{tabular}{lcc}
\toprule
Baseline Codec & Kodak Test Set & Clic-2021 Test Set \\
\midrule
bmshj2018-factorized & -0.49\%  & -0.69\%    \\
mbt2018-mean &  -1.27\%  & -1.21\%   \\
mbt2018  & -1.44\% & -1.71\%  \\
cheng2020-attn & -0.46\%  & -0.72\%  \\
InvCompress &	-0.55\%	 & -0.63\% \\
\bottomrule
\end{tabular}
\end{table}

\begin{table*}[h]
\centering
\caption{Computational complexity and \textbf{Latent Shift} performance over Finetuning solutions}
\label{tab:finetune}
\footnotesize
\begin{tabular}{lllcccc}
\toprule
Model&	Encoding Time&	Decoding Time&	Bd-Rate (Kodak)	&Corr. (Kodak)&	Bd-Rate (Clic)&	Corr.  (Clic) \\
\midrule
bmshj2018 baseline  & x1  & +0.0\% & 0.0\% &  & 0.0\% &   \\
Only \textbf{Latent Shift} &  x12&	+0.7\%&	-0.49\% &	-0.108&	-0.69\%	&-0.0952  \\
Only FineTuning  & x4800&	+0.0\%	&-6.52\%	&	&-6.73\%	&  \\
FineTuning + \textbf{Latent Shift} &x4812&	+0.7\%&	-7.88\%&	-0.165&	-8.06\%&	-0.1578  \\
\midrule
mbt2018-mean baseline & x1  & +0.0\%  &	0.0\%	&  &	0.0\%&  \\
Only \textbf{Latent Shift} & x10.1 &	+0.7\%	&-1.27\%	&-0.1805	&-1.21\% &	-0.1465   \\
Only FineTuning  & x4380 &	+0.0\% &	-5.77\%	& 
 &	-5.53\%	&  \\
FineTuning + \textbf{Latent Shift} & x4390 &	+0.7\%&	-7.47\% &	-0.2212	&-7.16\% &	-0.1796 \\
\bottomrule
\end{tabular}
\end{table*}

The codecs are taken off the shelf and are not retrained. The rate is calculated from the final length of the compressed data and RGB PSNR is used for distortion metric. We assess the performance using the bd-rate \cite{bjontegaard2001calculation} on the rate range of 0.1bpp-1.6bpp used in compressAI.

In Table~ \ref{tab:image}, we show the results of our \textbf{Latent Shift} method for different datasets and codecs. Even though the bd-rate gains are moderate, it constantly saves bit from wide variety of the neural architectures. The \textbf{Latent Shift}'s performance depends on how the gradients are correlated. To assess this correlation, we show the scatter plot between \textbf{Latent Shift}'s individual gains in dB and actual correlation coefficient between gradients for all test images and for all quality level in Figure \ref{fig:corr}c. We found a correlation of $-0.76$ where this correlation is independent from datasets, reconstruction quality or model. 

We have evaluated complexity of \textbf{Latent Shift} over the selected two baseline models with and without fine-tuning solutions in Table~ \ref{tab:finetune}. Since we fine-tune the latents for 1000 iterations during encoding time, it needs 1000 forward pass and 1000 backward gradient calculations that results to around 4000 times more complexity without parallelization. On the other hand, our \textbf{Latent Shift} needs to test the best step size ($\rho_f,\rho_h$) out of 8 predefined candidates. Thus \eqref{eq:opt_side} and \eqref{eq:opt_main} need 8 forward pass and one single gradient calculation whose closed form solution is available. It explains why \textbf{Latent Shift} gives around 10 times encoding complexity without parallelization. However, both approaches have negligible decoder complexity.  We report the complexities on single thread as opposite to the previous researches with massive paralellization in order to understand their necessary number of clock cycle as a proxy of energy consumption. 

An important result can be seen when \textbf{Latent Shift} is applied after fine-tuning solutions. Since fine-tuning solution applies the minimization of the loss without averaging images in train set as in \eqref{eq:finetune}, KKT conditions get strength and existed correlations between gradients are increased. For instance the average gradients correlation for Kodak set becomes $-0.2212$ after fine-tuning while it was $-0.1805$ in \textbf{mbt2018-mean} codec. The improvement on the correlations results improvement of the performance as it can be seen that fine-tuning solution only has $5.77\%$ rate saving while together with \textbf{Latent Shift}, it increases to $-7.47\%$ for Kodak dataset. 

\section{Conclusion}
\label{sec:conclusion}
\vskip -0.01in
In this work, we have proposed \textbf{Latent Shift} to improve further the latent representation of compressive variational auto-encoders (VAE). We have demonstrated the correlation between the entropy gradient and the distortion gradient, and show that this can be used to bring significant gains on top of several state-of-the-art compressive auto-encoders, without any need of retraining.
From these results, several improvements can be foreseen. First, the correlation of the gradients depends on the training set according to the definition of the KKT conditions. However, even though different models use the same training set and training procedure, their gradients' correlation coefficient may be different. Thus, it would be interesting to explore the connection of certain type of model architectures with the correlation of the gradients.      
\clearpage
\bibliographystyle{IEEEbib}
\bibliography{strings,refs}

\end{document}